\documentclass[11pt,a4paper,titlepage]{article}
\usepackage[utf8]{inputenc}
\usepackage[T1]{fontenc}
\usepackage[english]{babel}
\usepackage{mathrsfs}
\usepackage{amsmath}
\usepackage{amsfonts}
\usepackage{amssymb}
\usepackage{amsthm}
\usepackage{longtable}
\usepackage{enumerate}
\usepackage{bussproofs}
\usepackage{proof}
\usepackage{xcolor}
\usepackage[width=0.00cm, height=2.50cm, left=2.50cm, right=3.00cm, top=2.50cm, bottom=2.50cm]{geometry}
\usepackage{url}

\theoremstyle{definition}
\newtheorem{defn}{Definition}[section]
\newtheorem{theorem}{\textbf{Theorem}}

\newtheorem{proposition}{ \textbf{Proposition}}

\usepackage{units}
\usepackage{bussproofs}
\newcommand{\var}{\mathrm{var}}
\usepackage[normalem]{ulem}

\newcommand{\half}{\nicefrac[]{1}{2}}

\usepackage{longtable,enumerate}

\usepackage[breaklinks,colorlinks,citecolor=blue,urlcolor=blue]{hyperref}
\newcommand\DOI[2]{DOI: \href{#1#2}{#2}}

\newenvironment{Published}{\begin{list}{}{\leftmargin=25pt}
    \item[]\small\noindent{Published as}}%
    {\end{list}}
    \newenvironment{Funding}{\begin{list}{}{\leftmargin=25pt}
        \item[]\small\noindent{The research}}%
        {\end{list}}
\newenvironment{Abstract}{\begin{list}{}{\leftmargin=25pt}
    \item[]\footnotesize\noindent{\sc \bf Abstract:}}%
{\end{list}}
\newenvironment{keywords}{\begin{list}{}{\leftmargin=25pt}
    \item[]\footnotesize\noindent{\sc \bf Keywords:}}%
{\end{list}}

\begin{document}
\begin{center}
\vspace{0.3cm}
{\large  
Yaroslav Petrukhin\vspace{0.3cm}\\
\textbf{On Three-Valued Dependence-Like Logics}}

\ \\
\begin{footnotesize}
 Center for Philosophy of Nature, University of Lodz, Poland,  \url{yaroslav.petrukhin@gmail.com}\\
\end{footnotesize}
\end{center}
\begin{Published}
Petrukhin, Y. (2026)
On Three-Valued Dependence-Like Logics Logic and Logical Philosophy. Online first paper. 
\DOI{http://dx.doi.org/}{10.12775/LLP.2026.016}
\bigskip
\end{Published}
\begin{Funding}
 was funded by the European Union (ERC, ExtenDD, project number: 101054714). Views and opinions expressed are however those of the author(s) only and do not necessarily reflect those of the European Union or the European Research Council. Neither the European Union nor the granting authority can be held responsible for them. 
\end{Funding}
 \begin{Abstract}
In this paper, we investigate connections between dependence logics, viewed as a subclass of relating logics, and three-valued logics. More specifically, we identify common features of Epstein's subject-matter semantics and the variable-inclusion conditions characteristic of some infectious many-valued logics. Inspired by Del Cerro and Lugardon's sequent calculi for dependence logics, in which classical connectives are combined with connectives satisfying Epstein-style conditions, we introduce two three-valued dependence-like logics that combine classical conjunction and disjunction with infectious negation and implication. We also provide sound, complete, and cut-free bisequent calculi for these logics.
 \end{Abstract}
 \begin{keywords}
dependence logic; relating logic; three-valued logic; proof theory; bisequent calculi; cut admissibility
 \end{keywords}

\section{Introduction}
Dependence logics were introduced by Epstein \cite{Epstein} and may
be regarded as a special class of relating logics, although the broader programme of relating logics was articulated later (see, e.g., Jarmu\.{z}ek's paper \cite{Jarmuzek}). One of these dependence logics, known as \textbf{DAI} (the logic of demodalised analytic implication), was originally introduced by Dunn \cite{Dunn} in the context of relevance logics. Some connections between three- and four-valued logics and dependence
 logics were established by Demolombe \cite{Demolombe}. Later on, \textbf{DAI} was studied by Pra Baldi \cite{Baldi} in the context of logics of variable inclusion, in particular those known as infectious or nonsense logics (a special class of many-valued ones). Moreover, Pra Baldi presented a matrix semantics based on P\l{}onka sums for \textbf{DAI} (see also the paper \cite{LeddaPaoliBaldi} by Ledda, Paoli, and Pra Baldi). In this paper, we continue this line of research and introduce two simple dependence-like three-valued logics inspired by Del Cerro and Lugardon's sequent-calculus approach \cite{SequentsDependence} to dependence logics and their relatives.

 One of Epstein's ideas is that $ v(A\rightarrow B)=1 $ under a valuation $ v $ iff not only $ v(A)=0 $ or $ v(B)=1 $, but also $ s(B)\subseteq s(A) $, where $ s $ is a function from the set of all propositional variables to a set of `subject-matters'. This structure is understood rather broadly, which, we believe, leaves room for different interpretations. Following the approaches of Parry \cite{Parry}, Dunn \cite{Dunn}, and Epstein \cite{Epstein87,Epstein}, Pra Baldi \cite{Baldi} adopted the following reading of the expression `$ s(B)\subseteq s(A) $': every propositional variable occurring in $B$ also occurs in $A$. Thus, dependence logics may be connected with  logics of variable inclusion (some of which are many-valued). Pra Baldi's interpretation was inspired by Parry's
\cite{Parry} variable-inclusion requirement, which may be formulated
as the so-called \emph{Proscriptive Principle} (PP, for short; see
\cite{Ferguson} for further discussion). According to PP, every
propositional variable occurring in the conclusion must also occur
in the premises. This principle is widely discussed in the context of infectious logics, i.e. logics with an infectious value such that, when this value is
assigned to an argument of an operation, the operation also takes this value. Additionally, this principle might be viewed as a stronger version of the variable sharing principle, one of the main properties of relevance logics.\footnote{``The variable sharing principle says that no formula of the form $ A\rightarrow B$ can be proven in a relevance logic if $A$ and $B$ do not have at least one propositional variable (sometimes called a proposition letter) in common and that no inference can be shown valid if the premises and conclusion do not share at least one propositional variable.'' \cite{Stanford}.}

As shown by Ciuni and Carrara \cite{CiuniCarrara}, if PP holds and
a set of premises $\Gamma$ classically entails a formula $A$, then
$\Gamma$ entails $A$ in weak Kleene logic $\bf K_3^w$ \cite{Kleene38}. This principle might be viewed as one possible specification of Epstein's idea that $ v(A\rightarrow B)=1 $ under a valuation $ v $ iff not only $ v(A)=0 $ or $ v(B)=1 $, but also $ s(B)\subseteq s(A) $, where $ s $ is a function from the set of all propositional variables to a set of `subject-matters' (see Definition \ref{s} for its formal presentation).

On the other hand, one may also consider the converse of PP, which requires that all propositional variables of the premises must be included in the set of all propositional variables of the conclusion. As was shown by Ciuni and Carrara \cite{CiuniCarrara}, if such a principle holds and $ \Gamma $  classically entails $ A $, then $ \Gamma $ entails $ A $ in paraconsistent weak Kleene logic $ \bf PWK$ \cite{Hallden}. Epstein also considered a dependence logic whose implication requires
$s(A)\subseteq s(B)$ rather than $s(B)\subseteq s(A)$.

Del Cerro and Lugardon \cite{SequentsDependence} formulated sequent calculi for some dependence logics. Their rules for conjunction and disjunction are classical, while the ones for the implication and negation are modifications of the classical rules. Taking these observations into account, including the connection between
Epstein's subject-matter conditions and PP, the relationship between
dependence and many-valued logics, and Del Cerro and Lugardon's
proof-theoretic approach, we propose two three-valued logics with
classical conjunction and disjunction and non-classical implication
and negation. The first logic takes its infectious implication and
negation from $\bf K_3^w$, which is connected with PP, whereas
the second takes them from \textbf{PWK}, which is connected with the
converse variable-inclusion condition. Our logics are combinations of the connectives of strong Kleene logic $ \bf K_3$ \cite{Kleene38} and the logic of paradox \textbf{LP} \cite{Asenjo66,LP} with the connectives of weak Kleene logic $ \bf K_3^w$ \cite{Kleene38} and its paraconsistent version \textbf{PWK} \cite{Hallden}.  

The structure of the rest of our paper is as follows. In Section \ref{semantics}, we analyse the connection between many-valued and dependence logics and introduce our logics. In Section \ref{proof}, we present cut-free bisequent calculi for them. Section \ref{conclusion} contains the conclusion.

\section{From semantics of dependence logics to semantics of our logics}\label{semantics}
Let us fix a propositional language $ \mathscr{L} $ with the alphabet $ \langle \mathcal{P},\wedge,\vee,\rightarrow,\neg,(,)\rangle $, where $\mathcal{P}=\{p_1,q_1,r_1,s_1,p_2,\ldots\}$ is the set of propositional variables. The set $\mathscr{F}$ of all $\mathscr{L}$-formulas is defined inductively in the
usual way. We write $A,B,C,A_1,\ldots$ for the metavariables ranging over formulas and $\Gamma,\Delta,\Pi,\Sigma,\Gamma',\ldots$ for the metavariables ranging over sets of formulas. We write $\var(A)$ for the set of all propositional variables of a formula $A$. Let $ \var(\Gamma) $ be the set of all propositional variables of a set of formulas $ \Gamma $. Before describing the semantics of dependence logics, let us emphasize
the main idea behind these logics, as Del Cerro and Lugardon \cite{SequentsDependence} put it:
\begin{quote}
	``The idea behind dependence logics is to consider that the conditional
	expression `$A\rightarrow B$' is true if and only if $A$ is false or $B$ is true, and the
	subject matters of $A$ and $B$ are related.'' \cite[p. 58, notation adjusted]{SequentsDependence}
\end{quote}

The semantics of dependence logics can be described as follows (we follow Del Cerro and Lugardon's \cite[p. 58]{SequentsDependence} presentation).

\begin{defn}\label{s}
Let $ s $ be a function from $ \mathcal{P} $ into the power set of $\mathcal{T}$ (understood as the set of topics or `subject-matters'). Then $s$ is extended to
formulas and sets of formulas by the following clauses:
\begin{itemize}\itemsep=0pt
	\item $s(A)=\bigcup_{p\in\var(A)}s(p)$, 
	\item $s(\Gamma)=\bigcup_{A\in\Gamma}s(A)$.
\end{itemize}	
A valuation, relative to $s$, is a function $v_s$ from $\mathcal{P}$ into the set of truth
values $\{1, 0\}$. The valuation $v_s$ of formulas is defined for negation, conjunction, and disjunction,
as in classical logic. For implication, $v_s$ is defined as follows (several options are possible):
\begin{enumerate}\itemsep=0pt
	\item $ v_s(A\rightarrow B)=1 $ iff \big($v_s(A)=0$ or $v_s(B)=1$\big) and $ s(B)\subseteq s(A)$,
	\item $ v_s(A\rightarrow B)=1 $ iff \big($v_s(A)=0$ or $v_s(B)=1$\big) and $ s(A)\subseteq s(B)$,
	\item $ v_s(A\rightarrow B)=1 $ iff \big($v_s(A)=0$ or $v_s(B)=1$\big) and $ s(A)=s(B)$.
\end{enumerate}
For a given function $s$, a formula $A$ is $s$-valid iff
$v_s(A)=1$ for every valuation $v_s$.
\end{defn}

The first option is Epstein's original interpretation \cite{Epstein},
and the second is its converse. In what follows, we are going to focus on these two interpretations, since they may be connected with three-valued logics. With the proscriptive principle and its converse in mind, we could think about the following interpretations:

\begin{enumerate}\itemsep=0pt
\item $ v(A\rightarrow B)=1 $ iff \big($v(A)=0$ or $v(B)=1$\big) and $ \var(B)\subseteq\var(A)$,
\item $ v(A\rightarrow B)=1 $ iff \big($v(A)=0$ or $v(B)=1$\big) and $ \var(A)\subseteq\var(B)$.
\end{enumerate}

We could also think about some three-valued interpretations of implication which do not postulate the variable inclusion conditions, but have them as a consequence. One may find such a definition among infectious logics. However, to make this idea precise, we first need some formal definitions.

\begin{defn}
By a logical matrix we mean a triple $ \langle V,C,D\rangle $, where $ V $ is the set of truth values, $ C $ is the set of functions corresponding to connectives, and $ D $ is the set of designated values. The entailment relation is defined as follows:
\begin{center}
$\Gamma\models\Delta$ iff, for every valuation $v$, if
$v(A)\in D$ for every $A\in\Gamma$, then $v(B)\in D$ for some
$B\in\Delta$. 
\end{center}
\end{defn}

\begin{defn}
By a logic \textbf{L} we mean the set $ \{\langle \Gamma,\Delta\rangle: \Gamma\models_{\bf L}\Delta \} $.
\end{defn}

\begin{defn}
An element $i\in V$ is an infectious value of a logical matrix $\langle V,C,D\rangle$ iff for every $n$-ary function $c\in C$ and every tuple
$\langle x_1,\ldots,x_n\rangle\in V^n$, if $x_j=i$ for some $1\leq j\leq n$,
then $c(x_1,\ldots,x_n)=i$.
\end{defn}
 
\begin{defn}
 A logic having a logical matrix with an infectious value is said to be infectious.
\end{defn}

\begin{theorem}[Ciuni, Ferguson, and Szmuc \cite{CiuniFergusonSzmuc}]
Let $\mathbf{L}$ be the logic determined by the matrix
$\mathcal{M}_{\mathbf{L}}=\langle V,C,D\rangle$, and let $i\notin V$.
Assume that every function in $C$ has positive finite arity. For every
$k$-ary function $c\in C$, let $c^{i}$ be the $k$-ary function on
$V\cup\{i\}$ defined by $c^{i}(x_{1},\ldots,x_{k})=i$ if
$x_{j}=i$ for some $1\leq j\leq k$, and
$c^{i}(x_{1},\ldots,x_{k})=c(x_{1},\ldots,x_{k})$ otherwise.
Let $C^{i}=\{c^{i}:c\in C\}$. Let $\mathbf{S}$ and $\mathbf{T}$ be the logics determined by the
matrices $\mathcal{M}_{\mathbf{S}}=\langle V\cup\{i\},C^{i},D\rangle$
and $\mathcal{M}_{\mathbf{T}}=\langle V\cup\{i\},C^{i},D\cup\{i\}\rangle$, respectively. Then:
\begin{itemize}\itemsep=0pt
    \item $\Gamma\models_{\mathbf{S}}\Delta$ iff there is some
    $\Delta'\subseteq\Delta$ such that
    $\operatorname{var}(\Delta')\subseteq\operatorname{var}(\Gamma)$
    and $\Gamma\models_{\mathbf{L}}\Delta'$;

    \item $\Gamma\models_{\mathbf{T}}\Delta$ iff there is some
    $\Gamma'\subseteq\Gamma$ such that
    $\operatorname{var}(\Gamma')\subseteq\operatorname{var}(\Delta)$
    and $\Gamma'\models_{\mathbf{L}}\Delta$.
\end{itemize}
\end{theorem}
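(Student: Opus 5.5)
The plan is to reduce both items to one structural lemma about valuations on the extended matrices, and then handle each direction by restricting or extending valuations. I will call a map $v:\mathcal{P}\to V\cup\{i\}$ an \emph{$i$-valuation}; it extends to formulas via $C^{i}$. I will call a map $w:\mathcal{P}\to V$ an \emph{$\mathbf{L}$-valuation}; it extends via $C$.

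\emph{Key lemma.} For every $i$-valuation $v$ and every formula $A$:
\begin{enumerate}
\item[(a)] $v(A)=i$ iff $v(p)=i$ for some $p\in\var(A)$;
\item[(b)] if $v(p)\neq i$ for all $p\in\var(A)$, then $v(A)=w(A)$ for every $\mathbf{L}$-valuation $w$ agreeing with $v$ on $\var(A)$.
\end{enumerate}
Both parts go by induction on $A$. For a compound $c(A_1,\ldots,A_k)$, positive arity guarantees that at least one argument is present, so infection propagates upward. If no argument takes the value $i$, then $c^{i}$ coincides with $c$ on those arguments. Since $i\notin V$, a value computed via $C$ is never $i$, which gives the ``only if'' direction of (a).

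\emph{First item.} For ($\Leftarrow$), let $v$ be an $i$-valuation with $v(A)\in D$ for all $A\in\Gamma$. Then $v(A)\neq i$, because $i\notin D$ (as $i\notin V\supseteq D$). By (a), no variable of $\Gamma$ is sent to $i$, and hence no variable of $\Delta'$ is either, since $\var(\Delta')\subseteq\var(\Gamma)$. Now define $w$ to agree with $v$ on variables not sent to $i$, and to take an arbitrary fixed value of $V$ elsewhere. By (b), $w$ gives the same values to all of $\Gamma\cup\Delta'$. So $\Gamma\models_{\mathbf{L}}\Delta'$ yields some $B\in\Delta'\subseteq\Delta$ with $v(B)\in D$.

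For ($\Rightarrow$), argue contrapositively and take the canonical choice $\Delta'=\{B\in\Delta:\var(B)\subseteq\var(\Gamma)\}$. If $\Gamma\not\models_{\mathbf{L}}\Delta'$, pick a witnessing $\mathbf{L}$-valuation $w$. Set $v(p)=w(p)$ for $p\in\var(\Gamma)$ and $v(p)=i$ otherwise. By (b), $v$ designates all of $\Gamma$ and designates nothing in $\Delta'$. Every $B\in\Delta\setminus\Delta'$ contains a variable outside $\var(\Gamma)$, so by (a) $v(B)=i\notin D$. Hence $\Gamma\not\models_{\mathbf{S}}\Delta$.

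\emph{Second item.} This is the dual argument with designated set $D\cup\{i\}$. For ($\Leftarrow$), suppose $v$ designates $\Gamma$ and designates nothing in $\Delta$. Then every $B\in\Delta$ has $v(B)\neq i$, so no variable of $\Delta$, and hence none of $\Gamma'$, is sent to $i$. Restricting as before, each $A\in\Gamma'$ gets a value in $(D\cup\{i\})\setminus\{i\}=D$, while all of $\Delta$ stays outside $D$. This contradicts $\Gamma'\models_{\mathbf{L}}\Delta$.

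For ($\Rightarrow$), take $\Gamma'=\{A\in\Gamma:\var(A)\subseteq\var(\Delta)\}$ and a countermodel $w$ to $\Gamma'\models_{\mathbf{L}}\Delta$. Let $v=w$ on $\var(\Delta)$ and $v=i$ elsewhere. Then $\Gamma'$ is designated and $\Delta$ is undesignated by (b). Members of $\Gamma\setminus\Gamma'$ receive $i$ by (a), which is designated in $\mathbf{T}$.

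The only delicate points are in the key lemma: the role of positive arity (a nullary constant would break (a)) and the use of $i\notin V$. I also need to check the degenerate cases $\Delta'=\emptyset$ and $\Gamma'=\emptyset$; these are covered by the same argument with no change. The rest is bookkeeping, so I expect the careful induction in the key lemma to be the main obstacle.
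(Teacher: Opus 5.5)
The paper does not prove this theorem; it imports it from Ciuni, Ferguson, and Szmuc. Your argument is a correct, self-contained proof. It has three ingredients, and all of them check out:
\begin{itemize}\itemsep=0pt
\item the infection lemma (a)/(b);
\item restricting an $i$-valuation to an $\mathbf{L}$-valuation, for the right-to-left directions;
\item extending an $\mathbf{L}$-countermodel by sending the remaining variables to $i$, using the canonical witnesses $\Delta'=\{B\in\Delta:\var(B)\subseteq\var(\Gamma)\}$ and $\Gamma'=\{A\in\Gamma:\var(A)\subseteq\var(\Delta)\}$.
\end{itemize}
Since you never use compactness, the proof also covers infinite $\Gamma$ and $\Delta$.

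Two corrections to your side remarks. First, positive arity plays no role in your proof. For $k=0$ the clause ``$x_j=i$ for some $1\leq j\leq k$'' is never met, so $c^{i}()=c()\in V$. Since $\var(c)=\emptyset$, both sides of (a) are then false, so a nullary constant does not break (a). Likewise, in the inductive step for (a), an argument containing the infected variable exists automatically, without appeal to arity.

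Second, the hypothesis your proof silently relies on is $V\neq\emptyset$, used when you send infected variables to ``an arbitrary fixed value of $V$''. This is the usual convention for matrices, and it cannot be dropped. If $V=\emptyset$, there are no $\mathbf{L}$-valuations, so $\emptyset\models_{\mathbf{L}}\emptyset$ holds vacuously. Then for $\Gamma=\emptyset$ the right-hand side of the first item holds with $\Delta'=\emptyset$. Yet $\emptyset\not\models_{\mathbf{S}}\Delta$, because every formula takes the undesignated value $i$.
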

 
Let us present some examples of infectious logics. Here are the matrices of weak Kleene logic $ \bf K_3^w$ \cite{Kleene38} and its paraconsistent version \textbf{PWK} \cite{Hallden}.

\begin{center}
{\small  \begin{tabular}{|c|c|}\hline
$A $ & $\neg $  \\\hline
1 & 0  \\\hline
\nicefrac[]{1}{2} & \nicefrac[]{1}{2}  \\\hline
0 & 1 \\\hline
\end{tabular} 
             \begin{tabular}{|c|ccc|}\hline
              $ \vee $ & 1 & \nicefrac[]{1}{2} &  0 \\\hline
              1 & 1 & \nicefrac[]{1}{2} & 1 \\\hline
            \nicefrac[]{1}{2} & \nicefrac[]{1}{2} & \nicefrac[]{1}{2} & \nicefrac[]{1}{2}  \\\hline
              0 & 1 & \nicefrac[]{1}{2} & 0 \\\hline
              \end{tabular}
                \begin{tabular}{|c|ccc|}\hline
                 $ \wedge $ & 1 & \nicefrac[]{1}{2} & 0 \\\hline
                 1 & 1 & \nicefrac[]{1}{2} & 0 \\\hline
               \nicefrac[]{1}{2} & \nicefrac[]{1}{2} & \nicefrac[]{1}{2} & \nicefrac[]{1}{2} \\\hline
                 0 & 0 & \nicefrac[]{1}{2} & 0 \\\hline
                 \end{tabular}
                \begin{tabular}{|c|ccc|}\hline
                 $ \rightarrow $ & 1 & \nicefrac[]{1}{2} & 0 \\\hline
1 & 1 & \nicefrac[]{1}{2} & 0 \\\hline
\nicefrac[]{1}{2} & \nicefrac[]{1}{2} & \nicefrac[]{1}{2} & \nicefrac[]{1}{2} \\\hline
                 0 & 1 & \nicefrac[]{1}{2} & 1 \\\hline
                 \end{tabular}                  }
\end{center}

These logics have the same connectives, but different sets of designated values: in the case of $\bf K_3^w$, $D=\{1\}$, and in the case of $\bf PWK$, $D=\{1,\nicefrac[]{1}{2}\}$. 
We can characterize the entailment relations of $ \bf K_3^w $ and \textbf{PWK} as a corollary of the theorem of Ciuni, Ferguson, and Szmuc (\textbf{CL} stands for classical logic):\footnote{These results were first obtained by Ciuni and Carrara \cite{CiuniCarrara}.}

\begin{itemize}\itemsep=0pt
\item $ \Gamma\models_{\bf K_3^w}\Delta $ iff $ \var(\Delta^\prime)\subseteq\var(\Gamma) $ for some $ \Delta^\prime\subseteq\Delta $ such that $ \Gamma\models_{\bf CL}\Delta^\prime $,
\item $ \Gamma\models_{\bf PWK}\Delta $ iff $ \var(\Gamma^\prime)\subseteq\var(\Delta) $ for some $ \Gamma^\prime\subseteq\Gamma $ such that $ \Gamma^\prime\models_{\bf CL}\Delta $,
\item If $ \var(\Delta)\subseteq\var(\Gamma) $ and $ \Gamma\models_{\bf CL}\Delta $, then $ \Gamma\models_{\bf K_3^w}\Delta $,
\item If $ \var(\Gamma)\subseteq\var(\Delta) $ and $ \Gamma\models_{\bf CL}\Delta $, then $ \Gamma\models_{\bf PWK}\Delta $.
\end{itemize}

Thus, in infectious logics, the connectives are defined in such a way that their entailment relations can be characterised via the entailment relations of some non-infectious logics, but subject to restrictions on propositional variables. These restrictions might be viewed as a particular case of the conditions imposed on the function $s$ in
dependence logics. Taking this observation into account, we do not think it would be
correct to say that infectious logics are dependence logics.
Nevertheless, there is a clear connection between these types of logics. We may say that the case of $ \bf K_3^w$ in a sense reflects the condition $ s(B)\subseteq s(A) $ from Epstein's definition of implication. The case of \textbf{PWK} reflects the condition $ s(A)\subseteq s(B) $ from the converse of Epstein's definition. To sum up, we state the following claims:
\begin{itemize} \itemsep=0pt
\item Variable inclusion conditions may be viewed as a particular specification of Epstein's conditions used in the definition of an implication.
\item Infectious logics might be viewed as relatives of dependence logics and as a kind of simplification of the idea of dependence logics, but probably not as dependence logics themselves.

\item In dependence logics, restrictions regarding common content are used for implication only, while in infectious logics variable inclusion conditions are used for all the connectives.
\end{itemize}

To pursue the connection between infectious and dependence logics further, we consider two options:

\begin{itemize}\itemsep=0pt
\item To consider logics with classical negation, conjunction, and disjunction, but with infectious implication.
\item To let negation be infectious as well, taking into account Del Cerro and Lugardon's work \cite{SequentsDependence}, where sequent rules for negation are non-classical (notice that their calculi make it possible to prove decidability of Epstein's logics, but are complete for certain modifications of Epstein's logics).
\end{itemize}

Let us briefly look at the first option. We need to determine which three-valued negations, conjunctions, and
disjunctions count as classical. 
According to Rosser and Turquette \cite{RosserTurkette}, these are
the connectives satisfying the following standard conditions:
\begin{itemize}\itemsep=0pt
\item $ v(A\wedge B)\in D$ iff $ v(A)\in D$ and $ v(B)\in D$,
\item $ v(A\vee B)\not\in D$ iff $ v(A)\not\in D$ and $ v(B)\not\in D$,
\item $ v(\neg A)\in D $ iff $ v(A)\not\in D$,
\item $ v(A\rightarrow B)\notin D$ iff $ v(A)\in D$ and $ v(B)\not\in D$.
\end{itemize}

One may easily check that the implication shared by $ \bf K_3^w$ and \textbf{PWK} (weak Kleene \cite{Kleene38} implication, studied also by Bochvar \cite{Bochvar}) is the only three-valued infectious implication that coincides with classical implication on the set $ \{0,1\} $. Thus, each triple $\langle\neg,\land,\lor\rangle$ of connectives satisfying
the standard conditions, together with weak Kleene--Bochvar implication,
determines a three-valued dependence-like logic.  
 
We may also drop the requirement that the connectives coincide with
their classical counterparts on the set $\{0,1\}$, thereby obtaining
an even wider class of connectives.

Let us now consider the second option, in which negation is infectious. The negation shared by $\bf K_3^w$ and \textbf{PWK} is the only three-valued infectious negation that coincides with classical negation
on $\{0,1\}$. 
Such a choice reduces the set of candidates for a three-valued dependence-like logic. We have one candidate for negation, one candidate for implication, and several candidates for conjunction and disjunction. We can vary the set of designated values: either $ \{1\} $ or $ \{1,\half\} $. More generally, we may consider connectives that do not coincide with
their classical counterparts on $\{0,1\}$. We require only that they
take a designated value whenever the corresponding classical connective
takes the value $1$, and a non-designated value whenever it takes the
value $0$. As a result, we get a wide family of dependence-like logics.

Within this family, the two most interesting logics are those whose implication and negation are weak Kleene--Bochvar and whose conjunction and disjunction are given by minimum and maximum, respectively (we get two logics from one set of connectives by changing the set of designated values). These definitions of conjunction and disjunction are the most popular in three-valued logic and are arguably the most natural. It therefore seems reasonable to choose them for our dependence-like logics. 
The truth tables for our logics are as follows: 

\begin{center}
 \begin{tabular}{|c|c|}\hline
$A $ & $\neg $  \\\hline
1 & 0  \\\hline
\nicefrac[]{1}{2} & \nicefrac[]{1}{2}  \\\hline
0 & 1 \\\hline
\end{tabular} 
             \begin{tabular}{|c|ccc|}\hline
              $ \vee $ & 1 & \nicefrac[]{1}{2} &  0 \\\hline
              1 & 1 & 1 & 1 \\\hline
            \nicefrac[]{1}{2} & 1 & \nicefrac[]{1}{2} & \nicefrac[]{1}{2}  \\\hline
              0 & 1 & \nicefrac[]{1}{2} & 0 \\\hline
              \end{tabular}
                \begin{tabular}{|c|ccc|}\hline
                 $ \wedge $ & 1 & \nicefrac[]{1}{2} & 0 \\\hline
                 1 & 1 & \nicefrac[]{1}{2} & 0 \\\hline
               \nicefrac[]{1}{2} & \nicefrac[]{1}{2} & \nicefrac[]{1}{2} & 0 \\\hline
                 0 & 0 & 0 & 0 \\\hline
                 \end{tabular}
                \begin{tabular}{|c|ccc|}\hline
                 $ \rightarrow $ & 1 & \nicefrac[]{1}{2} & 0 \\\hline
1 & 1 & \nicefrac[]{1}{2} & 0 \\\hline
\nicefrac[]{1}{2} & \nicefrac[]{1}{2} & \nicefrac[]{1}{2} & \nicefrac[]{1}{2} \\\hline
                 0 & 1 & \nicefrac[]{1}{2} & 1 \\\hline
                 \end{tabular}                  
\end{center}

Our logics are obtained by replacing the strong Kleene implication in strong Kleene $ \bf K_3$ \cite{Kleene38} and the logic of paradox \textbf{LP} \cite{Asenjo66,LP,Priest} with weak Kleene--Bochvar implication. We call our logics $ \bf K_3^d $ and $ \bf LP^d $ (\textbf{d} stands for dependence) and define their entailment relations as follows:
\begin{itemize}\itemsep=0pt
\item $\Gamma\models_{\bf K_3^d}\Delta$ iff, for every valuation
$v$, if $v(A)=1$ for every $A\in\Gamma$, then $v(B)=1$ for some
$B\in\Delta$;

\item $\Gamma\models_{\bf LP^d}\Delta$ iff, for every valuation
$v$, if $v(A)\neq 0$ for every $A\in\Gamma$, then $v(B)\neq 0$ for
some $B\in\Delta$.
\end{itemize}

At the end of this section, let us investigate some properties of $ \bf K_3^d$ and $ \bf LP^d$. 

\begin{proposition}\label{Express}
The following definability relations hold in
$ \bf K_3^d$ and $ \bf LP^d$, for every valuation $v$
and all formulas $A$ and $B$:
\begin{itemize}\itemsep=0pt
	\item $ v(A\wedge B)=v(\neg(\neg A\vee\neg B)) $,
	\item $ v(A\vee B)=v(\neg(\neg A\wedge\neg B)) $,
	\item $ v(A\rightarrow B)=v((\neg A\vee B)\wedge(A\vee\neg A)\wedge (B\vee\neg B))$.
\end{itemize}	
\end{proposition}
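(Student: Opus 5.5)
The identities concern only the truth functions of the matrix, not the designated set, so one verification serves for both $\bf K_3^d$ and $\bf LP^d$. The plan is to verify each identity directly on the truth tables. Since $v(C)$ for a compound $C$ depends only on the values of its immediate subformulas, it suffices to check that the two sides agree as functions of the pair $\langle v(A),v(B)\rangle\in\{1,\half,0\}^2$. I would treat $\wedge$ and $\vee$ as $\min$ and $\max$ on the order $0<\half<1$, and $\neg$ as the order-reversing map $x\mapsto 1-x$.

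For the first two identities, I would avoid a nine-row table and use these algebraic descriptions. The map $x\mapsto 1-x$ is an order-reversing involution on the chain $\{0,\half,1\}$, so $1-\max(1-x,1-y)=\min(x,y)$ and $1-\min(1-x,1-y)=\max(x,y)$. These are just the De Morgan laws of strong Kleene logic, since $\wedge,\vee,\neg$ here coincide with those of $\bf K_3$. This settles the first two items.

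For the third identity I would split into two cases.

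\emph{Case 1: $v(A),v(B)\in\{0,1\}$.} All connectives behave classically on $\{0,1\}$ and $\{0,1\}$ is closed under them. Hence $A\vee\neg A$ and $B\vee\neg B$ both take value $1$, and the right-hand side reduces to $v(\neg A\vee B)$. This equals the value of weak Kleene--Bochvar implication on classical inputs.

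\emph{Case 2: $v(A)=\half$ or $v(B)=\half$.} The left-hand side is $\half$, because $\to$ is infectious. On the right, if $v(A)=\half$ then $v(A\vee\neg A)=\max(\half,\half)=\half$. Likewise $v(B\vee\neg B)=\half$ if $v(B)=\half$. So at least one conjunct equals $\half$, and every conjunct has value $\geq\half$: the excluded-middle conjuncts are always $\max(x,1-x)\geq\half$. It remains to check that $v(\neg A\vee B)\geq\half$ in this case. That holds because $\max(1-v(A),v(B))$ is $\geq\half$ whenever $v(A)=\half$ or $v(B)=\half$. Hence the minimum of the three conjuncts is exactly $\half$.

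The only step needing care is this last lower bound. A zero-valued conjunct would make the strong Kleene conjunction return $0$ rather than $\half$, so one must confirm that $\neg A\vee B$ cannot be $0$ once a $\half$ is present. This is immediate from $\max$ with a $\half$ argument. I would also remark that the parenthesization of the triple conjunction is immaterial, since $\min$ is associative.
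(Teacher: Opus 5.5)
Your proof is correct and is essentially the paper's argument: the paper's proof is just ``by a routine check'' of the truth tables. Reading $\wedge,\vee,\neg$ as $\min$, $\max$ and $x\mapsto 1-x$, and splitting the implication case by whether $\half$ occurs, organises that check compactly. Your remarks that the designated set plays no role and that the bracketing of the triple conjunction is immaterial are also accurate.
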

\begin{proof}
	By a routine check.
\end{proof}

Following Priest \cite{Priest}, we say that a logic \textbf{L} is paraconsistent if for some formulas $ A $ and $ B $, it holds that $ A,\neg A\not\models_{\bf L} B$. Following Sette and Carnielli \cite{SetteCarnielli}, we say that a logic \textbf{L} is paracomplete if for some formula $ A $, it holds that $\not\models_{\bf L}A\vee\neg A$.

\begin{proposition}
	\begin{enumerate}[(1)]\itemsep=0pt
		\item $ \bf K_3^d$ is paracomplete,
		\item $ \bf K_3^d$ does not have any tautologies,
		\item $ \bf LP^d$ is paraconsistent,
		\item $\bf LP^d$ and \textbf{CL} have the same tautologies.
	\end{enumerate}	
\end{proposition}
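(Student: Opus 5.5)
Items (1)–(3) each need only one well-chosen valuation, and item (4) needs a two-directional argument; the key tool throughout is the constant valuation sending every propositional variable to $\half$.

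\emph{Step 1 (the $\half$-valuation).} Let $v_{\half}(p)=\half$ for every $p\in\mathcal{P}$. I show by induction on $A$ that $v_{\half}(A)=\half$ for every formula $A$. The base case is immediate. For the inductive step, the tables give $\neg\half=\half$, $\half\rightarrow\half=\half$, and $\half\wedge\half=\min(\half,\half)=\half=\half\vee\half$.

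This settles (1) and (2). Since $v_{\half}(A)=\half\neq 1$ for every $A$, no formula is a $\bf K_3^d$-tautology, which is (2). In particular $\not\models_{\bf K_3^d}p\vee\neg p$, which is (1).

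\emph{Step 2 (paraconsistency).} For (3) I take distinct variables $p,q$ and a valuation with $v(p)=\half$ and $v(q)=0$. Then $v(p)=v(\neg p)=\half\neq 0$, so both premises are designated in $\bf LP^d$. The conclusion gets $v(q)=0$, so $p,\neg p\not\models_{\bf LP^d}q$.

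\emph{Step 3 (tautologies of $\bf LP^d$).} The tables agree with classical ones on $\{0,1\}$, so every classical valuation is an $\bf LP^d$-valuation. Hence every $\bf LP^d$-tautology is a \textbf{CL}-tautology.

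For the converse I need that every classical tautology $A$ satisfies $v(A)\neq 0$ under every three-valued valuation $v$. I use a monotonicity lemma. Order the values by $\half\sqsubseteq 0$, $\half\sqsubseteq 1$, with $\sqsubseteq$ reflexive. All four connectives are monotone with respect to $\sqsubseteq$:
\begin{itemize}
\item for $\neg$ and $\rightarrow$ this holds because $\half$ is infectious;
\item for $\wedge$ and $\vee$ it holds because they are the strong Kleene $\min$ and $\max$, which are known to be monotone in the information order.
\end{itemize}
Given $v$, choose a classical valuation $w$ with $v\sqsubseteq w$ pointwise on variables, for instance by sending each $\half$ arbitrarily to $0$ or $1$. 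Induction on formulas yields $v(A)\sqsubseteq w(A)$ for all $A$. If $A$ is a classical tautology, then $w(A)=1$, so $v(A)\in\{\half,1\}$, i.e.\ $v(A)\neq 0$.

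\emph{Main obstacle.} The only step needing care is verifying monotonicity of $\min$ and $\max$ in the information order. I would do this by a short case check: once the $\half$ arguments are refined to classical values, a result of $0$ or $1$ cannot change.
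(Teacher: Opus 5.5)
Your proposal is correct and follows essentially the paper's route. You use the same witnessing valuations for (1)--(3) and the same two directions for (4). Your information-order monotonicity lemma repackages the paper's simultaneous induction, which shows that $v(A)=0$ implies $w(A)=0$ and $v(A)=1$ implies $w(A)=1$ (i.e.\ $v(A)\sqsubseteq w(A)$) for the classical refinement $w$ sending $\half$ to $1$. Allowing an arbitrary refinement changes nothing of substance.
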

\begin{proof}
	(1) Follows from the fact that $ \not\models p_1\vee\neg p_1$ under the valuation $ v $ such that $ v(p_1)=\half $.
	
	(2)	Let $v(p)=\half$ for every propositional variable $p$.
A straightforward induction on formulas shows that $v(A)=\half$
for every formula $A$. Since $\half$ is not designated in
$\bf K_3^d$, the logic has no tautologies.
	
	(3) Follows from the fact that $p_1,\neg p_1\not\models p_2$ under the valuation $ v $ such that $ v(p_1)=\half $ and $ v(p_2)=0 $.	

(4) We adopt Epstein's proof \cite{Epstein} of the fact that
\textbf{LP} and \textbf{CL} have the same set of tautologies.\footnote{This
result was first obtained by Martin \cite{Martin}.} Let us prove that a formula $F$ is not valid in \textbf{CL} iff there is a valuation $v$ in $\bf LP^d$ such that $v(F)=0$. Observe that $(\ast_1)$ any valuation in \textbf{CL} can also be regarded
as a valuation in $\bf LP^d$. Thus, $(\ast_2)$ if, under some
classical valuation $v'$, we have $v'(F)=0$, then we also have
$v'(F)=0$ in $\bf LP^d$.

Conversely, suppose that there is a valuation $v$ in $\bf LP^d$
such that $v(F)=0$. Let us define a valuation $w$ in \textbf{CL} as
follows. For any $p\in\mathcal{P}$, let $w(p)=1$ iff
$v(p)\in\{1,\half\}$, and let $w(p)=0$ iff $v(p)=0$. By structural
induction on $A$, it is possible to show that:
\begin{enumerate}\itemsep=0pt
    \item[(a)] if $v(A)=0$, then $w(A)=0$;
    \item[(b)] if $v(A)=1$, then $w(A)=1$.
\end{enumerate}

We prove (a) and (b) simultaneously by induction on the complexity of
$A$. As an example, we consider the case of implication. Suppose that
$A=B\rightarrow C$. By the induction hypothesis, $v(B)=0$ implies
$w(B)=0$, $v(C)=0$ implies $w(C)=0$, $v(B)=1$ implies $w(B)=1$, and
$v(C)=1$ implies $w(C)=1$.

Assume that $v(B\rightarrow C)=0$. Then $v(B)=1$ and $v(C)=0$. Hence,
by the induction hypothesis, $w(B)=1$ and $w(C)=0$. Therefore,
$w(B\rightarrow C)=0$. Thus, $v(B\rightarrow C)=0$ implies
$w(B\rightarrow C)=0$.

Assume that $v(B\rightarrow C)=1$. Then either
$v(B)=1$ and $v(C)=1$, or $v(B)=0$ and $v(C)=1$, or $v(B)=0$ and
$v(C)=0$. Suppose first that $v(B)=1$ and $v(C)=1$. By the induction
hypothesis, $w(B)=1$ and $w(C)=1$. Hence, $w(B\rightarrow C)=1$.
Suppose next that $v(B)=0$ and $v(C)=1$. By the induction hypothesis,
$w(B)=0$ and $w(C)=1$. Thus, $w(B\rightarrow C)=1$. Finally, suppose
that $v(B)=0$ and $v(C)=0$. By the induction hypothesis, $w(B)=0$ and
$w(C)=0$. Consequently, $w(B\rightarrow C)=1$. Therefore,
$v(B\rightarrow C)=1$ implies $w(B\rightarrow C)=1$.

The remaining cases are similar. Since $v(F)=0$, condition (a) yields
$w(F)=0$. Hence, $\not\models_{\mathbf{CL}}F$.
\end{proof}

Let us compare $ \bf K_3$ with $ \bf K_3^d$ and \textbf{LP} with $ \bf LP^d$. We know that both $ \bf K_3$ and $ \bf K_3^d$ have no tautologies, but are their consequence relations different? Are the consequence relations of $ \bf LP$ and $ \bf LP^d$ different? Both questions have affirmative answers. Thus, $\bf K_3$ and
$\bf K_3^d$ are distinct, as are \textbf{LP} and $\bf LP^d$.
\begin{proposition}
\begin{itemize}\itemsep=0pt
	\item $\neg p_1\vee p_2\models_{\bf K_3}p_1\rightarrow p_2$, 
	\item $\neg p_1\vee p_2\not\models_{\bf K_3^d}p_1\rightarrow p_2$,
	\item $\neg (p_1\rightarrow p_2)\models_{\bf LP}\neg(\neg p_1\vee p_2)$, 
	\item $\neg (p_1\rightarrow p_2)\not\models_{\bf LP^d}\neg(\neg p_1\vee p_2)$.
\end{itemize}	
\end{proposition}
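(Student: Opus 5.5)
The four items split into two positive claims about the strong Kleene logics $\bf K_3$ and \textbf{LP} and two countermodels for $\bf K_3^d$ and $\bf LP^d$. I use the standard strong Kleene implication, in which $v(A\rightarrow B)=\max(1-v(A),v(B))$, i.e.\ $v(A\rightarrow B)=v(\neg A\vee B)$, with the same $\neg$, $\wedge$, $\vee$ as in the tables above. Both positive claims then follow at once, since in $\bf K_3$ and \textbf{LP} the formulas $p_1\rightarrow p_2$ and $\neg p_1\vee p_2$ always receive the same value. For the first item, if $v(\neg p_1\vee p_2)=1$ then $v(p_1\rightarrow p_2)=1$. For the third item, $v(\neg(p_1\rightarrow p_2))=v(\neg(\neg p_1\vee p_2))$ for every valuation, so designation (here $\neq 0$) is preserved. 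I would state this identity once and derive both items from it.

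For the second item, I exploit the infectiousness of the weak Kleene--Bochvar implication against the non-infectious strong Kleene disjunction. Take $v(p_1)=\half$ and $v(p_2)=1$. Then $v(\neg p_1)=\half$, so $v(\neg p_1\vee p_2)=\max(\half,1)=1$, and the premise is designated in $\bf K_3^d$. But $v(p_1\rightarrow p_2)=\half$ by the row for $\half$ in the table for $\rightarrow$, and $\half\neq 1$. So the entailment fails in $\bf K_3^d$.

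For the fourth item, I need the premise to have a value $\neq 0$ and the conclusion to have value $0$. The conclusion $\neg(\neg p_1\vee p_2)$ equals $0$ exactly when $v(\neg p_1\vee p_2)=1$. I make the premise $\half$ via infection: take $v(p_1)=\half$ and $v(p_2)=1$ again. Then $v(p_1\rightarrow p_2)=\half$, so $v(\neg(p_1\rightarrow p_2))=\half$, which is designated in $\bf LP^d$. Meanwhile $v(\neg p_1\vee p_2)=1$, so $v(\neg(\neg p_1\vee p_2))=0$, which is undesignated. Hence the entailment fails in $\bf LP^d$.

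There is no real obstacle here; the only point needing care is reading the tables correctly. In particular, the disjunction table of $\bf K_3^d$ and $\bf LP^d$ gives $\half\vee 1=1$, unlike the weak Kleene disjunction, and this is exactly what makes both countermodels work.
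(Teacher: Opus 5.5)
Your proposal is correct and is essentially the paper's proof: the paper only says ``by a routine check,'' and you carry that check out explicitly. The strong Kleene identity $v(p_1\rightarrow p_2)=v(\neg p_1\vee p_2)$ settles the first and third items. The single valuation $v(p_1)=\half$, $v(p_2)=1$, where $\half\vee 1=1$ but $\half\rightarrow 1=\half$, refutes both the $\mathbf{K_3^d}$ and the $\mathbf{LP^d}$ entailments, and all your table readings are right.
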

\begin{proof}
	By a routine check.
\end{proof}
\section{Proof theory for $\bf K_3^d$ and $\bf LP^d$}\label{proof}
In this section, we present cut-free bisequent calculi for $\bf K_3^d$ and $\bf LP^d$ obtained using the uniform approach developed in \cite{IP1,IP2}.

A sequent is an ordered pair written as $\Gamma\Rightarrow\Delta $, where $\Gamma$ and $\Delta$ are finite, possibly empty, multisets of formulas. A bisequent is an ordered pair of sequents $\Gamma\Rightarrow\Delta \mid \Pi\Rightarrow\Sigma$, where $\Gamma, \Delta, \Pi$, and $\Sigma$ are finite, possibly empty, multisets of formulas. In what follows, $ S $ stands for an arbitrary sequent.

Let us describe a bisequent calculus \textbf{BSC}. A bisequent $\Gamma\Rightarrow\Delta \mid \Pi\Rightarrow\Sigma$ is axiomatic iff at least one of
$\Gamma\cap\Sigma$, $\Gamma\cap\Delta$, and $\Pi\cap\Sigma$
is nonempty. The logical rules are as follows:
\begin{center}
\noindent\begin{tabular}{ll}		
	$(\mathord{\neg}\mathord{\Rightarrow\mid})$ \ $\dfrac{\Gamma
		\Rightarrow \Delta \mid \Pi\Rightarrow\Sigma, A}{\neg A, \Gamma \Rightarrow \Delta \mid \Pi\Rightarrow\Sigma}$ &
	$(\mathord{\Rightarrow}\mathord{\neg\mid})$ \ $\dfrac{\Gamma
		\Rightarrow \Delta \mid A, \Pi\Rightarrow\Sigma}{\Gamma
		\Rightarrow \Delta, \neg A \mid \Pi\Rightarrow\Sigma}$\\[16pt]
	
	$(\mathord{\mid\neg}\mathord{\Rightarrow})$ \ $\dfrac{\Gamma
		\Rightarrow \Delta, A \mid \Pi\Rightarrow\Sigma}{\Gamma \Rightarrow \Delta \mid \neg A, \Pi\Rightarrow\Sigma}$ &
	$(\mathord{\mid\Rightarrow}\mathord{\neg})$ \ $\dfrac{A, \Gamma
		\Rightarrow \Delta \mid \Pi\Rightarrow\Sigma}{\Gamma
		\Rightarrow \Delta \mid \Pi\Rightarrow\Sigma, \neg A}$\\[16pt]
\end{tabular}
	
\noindent\begin{tabular}{ll}	
	$(\mathord{\wedge}\mathord{\Rightarrow\mid})$ \ $\dfrac{A, B,
		\Gamma \Rightarrow \Delta \mid S}{A\wedge B, \Gamma
		\Rightarrow \Delta \mid S}$ &
	 $(\mathord{\Rightarrow}\mathord{\wedge\mid})$ \ $\dfrac{\Gamma
		\Rightarrow \Delta, A \mid S \qquad
		\Gamma\Rightarrow \Delta,
		B \mid S}{\Gamma \Rightarrow \Delta,
		A\wedge B \mid S}$\\[16pt]
	
 $(\mathord{\mid\wedge}\mathord{\Rightarrow})$ \ $\dfrac{S \mid A, B,
		\Gamma \Rightarrow \Delta}{S \mid A\wedge B, \Gamma
		\Rightarrow \Delta}$ &
 $(\mathord{\mid\Rightarrow}\mathord{\wedge})$ \ $\dfrac{S\mid \Gamma
		\Rightarrow \Delta, A \qquad
		S\mid \Gamma\Rightarrow \Delta,
		B}{S\mid\Gamma \Rightarrow \Delta,
		A\wedge B }$\\[16pt]
\end{tabular}

\noindent\begin{tabular}{ll}	
	
 $(\mathord{\Rightarrow}\mathord{\vee}\mid)$ \ $\dfrac{\Gamma
		\Rightarrow \Delta, A,B \mid S} {\Gamma\Rightarrow \Delta,
		A\vee B \mid S}$ &
	 $(\mathord{\vee}\mathord{\Rightarrow}\mid)$ \ $\dfrac{A, \Gamma
		\Rightarrow \Delta \mid S \qquad B, \Gamma
		\Rightarrow \Delta \mid S}{A\vee B,
		\Gamma\Rightarrow \Delta \mid S}$\\[16pt]

	 $(\mathord{\mid\Rightarrow}\mathord{\vee})$ \ $\dfrac{S\mid\Gamma
		\Rightarrow \Delta, A,B } {S\mid\Gamma\Rightarrow \Delta,
		A\vee B}$ &
 $(\mathord{\mid\vee}\mathord{\Rightarrow})$ \ $\dfrac{S\mid A, \Gamma
		\Rightarrow \Delta \qquad S\mid B, \Gamma
		\Rightarrow \Delta }{S\mid A\vee B,
		\Gamma \Rightarrow \Delta}$
\end{tabular}
\end{center}
\begin{center}
 {\footnotesize    $(\mathord{\Rightarrow}\mathord{\rightarrow}\mid)$ \ $\dfrac{\Gamma
		\Rightarrow \Delta, B \mid A, \Pi\Rightarrow\Sigma \qquad \Gamma\Rightarrow\Delta, A \mid A,
		\Pi \Rightarrow \Sigma \qquad \Gamma\Rightarrow\Delta, B \mid B, \Pi\Rightarrow\Sigma} {\Gamma
		\Rightarrow \Delta, A\rightarrow B \mid \Pi\Rightarrow\Sigma}$}
\end{center}
\begin{center}
{\footnotesize     $(\mathord{\rightarrow}\mathord{\Rightarrow}\mid)$ \ $\dfrac{A, B, \Gamma
		\Rightarrow \Delta \mid \Pi\Rightarrow\Sigma \qquad \Gamma\Rightarrow\Delta \mid \Pi \Rightarrow \Sigma, A, B \qquad B, \Gamma\Rightarrow\Delta \mid \Pi\Rightarrow\Sigma, A} {A\rightarrow B, \Gamma
		\Rightarrow \Delta \mid \Pi\Rightarrow\Sigma}$}
\end{center}
\begin{center}
$(\mathord{\mid\rightarrow}\mathord{\Rightarrow})$ \, $\dfrac{\Gamma
	\Rightarrow \Delta, A \mid \Pi\Rightarrow\Sigma \quad \Gamma\Rightarrow\Delta \mid B,
	\Pi \Rightarrow \Sigma } {\Gamma
	\Rightarrow \Delta \mid A\rightarrow B, \Pi\Rightarrow\Sigma}$	\end{center}
	\begin{center}
    	$(\mathord{\mid\Rightarrow}\mathord{\rightarrow})$ \, $\dfrac{A, \Gamma
		\Rightarrow \Delta \mid\Pi\Rightarrow\Sigma, B}{\Gamma\Rightarrow\Delta \mid \Pi\Rightarrow\Sigma, A\rightarrow B}$
\end{center}	

All these rules satisfy the subformula property as well as the
properties of explicitness, separation, and symmetry. In particular, the rules are context-independent: no side conditions
are imposed, and consequently all rules are permutable.

A proof-search tree for a bisequent $B$ in \textbf{BSC} is a tree of bisequents with $B$ as the root and nodes generated by rules of \textbf{BSC}. A proof-search tree is axiomatic iff all leaves are axiomatic. We write  $\vdash_{\bf K_3^d} B$ if $B$ is $\Gamma \Rightarrow A \mid\; \Rightarrow$ and there is an axiomatic proof-search tree for it. We write $\vdash_{\bf LP^d} B$ if $B$ is  $\Rightarrow\; \mid \Gamma \Rightarrow A$ and there is an axiomatic proof-search tree for it. The height
of a proof-search tree is defined as the maximum length of its branches.

\begin{theorem}[Soundness and Completeness]
 For every finite multiset of formulas $\Gamma$ and every formula $A$, the following hold:
 \begin{itemize}
     \item $\Gamma \models_{\bf K_3^d} A$ iff $ \vdash_{\bf K_3^d} \Gamma \Rightarrow A \mid \;\Rightarrow$,
     \item $\Gamma \models_{\bf LP^d} A$ iff $ \vdash_{\bf LP^d} \;\Rightarrow\; \mid \Gamma \Rightarrow A $.
 \end{itemize}
\end{theorem}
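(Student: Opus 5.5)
The plan is to interpret bisequents semantically, show that every rule of \textbf{BSC} preserves this interpretation in both directions under each single valuation, and then derive soundness and completeness from that invertibility. Given a valuation $v$, we say that $v$ \emph{satisfies} $\Gamma\Rightarrow\Delta\mid\Pi\Rightarrow\Sigma$ iff at least one of the following holds:
\begin{itemize}\itemsep=0pt
\item $v(C)\neq 1$ for some $C\in\Gamma$;
\item $v(C)=1$ for some $C\in\Delta$;
\item $v(C)=0$ for some $C\in\Pi$;
\item $v(C)\neq 0$ for some $C\in\Sigma$.
\end{itemize}
The bisequent is \emph{valid} iff every valuation satisfies it. So the left sequent speaks about the designated set $\{1\}$ and the right sequent about $\{1,\half\}$. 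Under this reading, $\Gamma\Rightarrow A\mid\;\Rightarrow$ is valid iff $\Gamma\models_{\bf K_3^d}A$, and $\Rightarrow\;\mid\Gamma\Rightarrow A$ is valid iff $\Gamma\models_{\bf LP^d}A$. The theorem thus reduces to showing that a bisequent is valid iff it has an axiomatic proof-search tree.

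The first step is to check that axiomatic bisequents are valid. If $C\in\Gamma\cap\Sigma$, then either $v(C)\neq1$ or $v(C)\neq0$, which always holds. The cases $C\in\Gamma\cap\Delta$ and $C\in\Pi\cap\Sigma$ are equally immediate.

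The central step is a lemma: for each rule and each valuation $v$, $v$ satisfies the conclusion iff $v$ satisfies all premises. Since the context is shared, it suffices to check that the condition on the principal formula is equivalent to the conjunction, over the premises, of the disjunctions of conditions on the active formulas. For negation this is direct: $v(\neg A)\neq1$ iff $v(A)\neq0$, and $v(\neg A)=0$ iff $v(A)=1$, and similarly for the other two negation rules. For $\wedge$ and $\vee$ it is the usual min/max reasoning.

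The implication rules are where the real work lies, and I expect them to be the main obstacle, since this is where the weak Kleene tables matter.
\begin{itemize}\itemsep=0pt
\item For $(\Rightarrow\rightarrow\mid)$: $v(A\rightarrow B)=1$ iff $v(A),v(B)\in\{0,1\}$ and ($v(A)=0$ or $v(B)=1$). This is exactly the conjunction of the three premise conditions: ``$v(B)=1$ or $v(A)=0$'', ``$v(A)=1$ or $v(A)=0$'', and ``$v(B)=1$ or $v(B)=0$''.
\item For $(\rightarrow\Rightarrow\mid)$: $v(A\rightarrow B)\neq1$ is the complement of the same condition. This should match ``$v(A)\neq1$ or $v(B)\neq1$'' and ``$v(A)\neq0$ or $v(B)\neq0$'' and ``$v(B)\neq1$ or $v(A)\neq0$''; I will verify it by running through the nine value pairs.
\item For $(\mid\rightarrow\Rightarrow)$: $v(A\rightarrow B)=0$ iff $v(A)=1$ and $v(B)=0$.
\item For $(\mid\Rightarrow\rightarrow)$: $v(A\rightarrow B)\neq0$ iff $v(A)\neq1$ or $v(B)\neq0$.
\end{itemize}

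Soundness then follows by induction on the height of an axiomatic proof-search tree, using the ``if'' direction of the lemma.

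For completeness, I build for any bisequent a maximal proof-search tree by applying rules to every compound formula until each leaf contains only propositional variables. This terminates because each rule replaces a principal formula by proper subformulas. If the bisequent is not provable, some leaf $\Gamma_0\Rightarrow\Delta_0\mid\Pi_0\Rightarrow\Sigma_0$ consisting of atoms is non-axiomatic. Define
\[
v(p)=\begin{cases}1 & \text{if } p\in\Gamma_0,\\ 0 & \text{if } p\in\Sigma_0,\\ \half & \text{otherwise.}\end{cases}
\]
This is well defined because $\Gamma_0\cap\Sigma_0=\emptyset$. Because $\Gamma_0\cap\Delta_0=\emptyset$, atoms in $\Delta_0$ get value $0$ or $\half$, hence $\neq1$. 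Because $\Pi_0\cap\Sigma_0=\emptyset$, atoms in $\Pi_0$ get value $1$ or $\half$, hence $\neq0$. So $v$ falsifies the leaf. By the ``only if'' direction of the lemma, applied along the branch from the leaf to the root, $v$ falsifies the root. Specialising the root to $\Gamma\Rightarrow A\mid\;\Rightarrow$ and to $\Rightarrow\;\mid\Gamma\Rightarrow A$ gives both clauses of the theorem.
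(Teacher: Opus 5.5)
Your proof is correct, but it takes a different route from the paper, which gives no argument of its own. The paper simply invokes the uniform soundness and completeness results of \cite{IP1,IP2} for bisequent calculi generated from three-valued truth tables, where the left sequent is read relative to $\{1\}$ and the right one relative to $\{1,\half\}$. You instead carry out that method directly for \textbf{BSC}:
\begin{enumerate}
\item a disjunctive reading of bisequents;
\item a per-valuation invertibility lemma for every rule;
\item soundness by induction on the tree;
\item a countermodel read off a non-axiomatic atomic leaf of a fully expanded tree.
\end{enumerate}

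The citation buys generality. It covers arbitrary three-valued connectives with these designated sets, and the same framework delivers the structural results the paper uses later. However, it leaves the reader to check that the \textbf{BSC} rules really are the ones the uniform method produces for weak Kleene implication and negation. Your lemma performs exactly that check, and it goes through. For instance, the three premises of $(\rightarrow\Rightarrow\mid)$ exclude precisely the pairs $(v(A),v(B))=(1,1)$, $(0,0)$, $(0,1)$. These are exactly the arguments on which $A\rightarrow B$ takes value $1$. Your argument also yields the stronger statement that an arbitrary bisequent is valid iff it is provable, from which both clauses follow at once.

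One small point to tighten concerns termination. ``Each rule replaces a principal formula by proper subformulas'' has to be measured by the multiset ordering on formula complexities, or by a weight such as $3^{c(F)}$, not by total size. The reason is that the second and third premises of $(\Rightarrow\rightarrow\mid)$ contain two copies of $A$ (resp.\ $B$), so the total number of connectives can increase.
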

\begin{proof}
    Follows from the results of \cite{IP1,IP2}.
\end{proof}

\begin{theorem}[Cut admissibility]
The following cut rules are admissible in \textbf{BSC}:
\begin{center}
	$(Cut\mid)$ \ $\dfrac{\Gamma 
	\Rightarrow  \Delta, A \mid \Lambda\Rightarrow\Theta \qquad
	A, \Pi  \Rightarrow  \Sigma \mid \Xi\Rightarrow\Omega} {\Gamma, \Pi  \Rightarrow  \Delta, \Sigma \mid \Lambda, \Xi\Rightarrow\Theta, \Omega}$
\end{center}

\begin{center}
	$(\mid Cut)$ \ $\dfrac{\Gamma 
	\Rightarrow  \Delta \mid \Lambda\Rightarrow\Theta, A \qquad
	\Pi  \Rightarrow  \Sigma \mid A, \Xi\Rightarrow\Omega} {\Gamma, \Pi  \Rightarrow  \Delta, \Sigma \mid \Lambda, \Xi\Rightarrow\Theta, \Omega}$
\end{center}
\end{theorem}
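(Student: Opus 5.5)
The plan is to prove admissibility \emph{semantically}, in the style of \cite{IP1,IP2}, rather than by a syntactic cut-reduction argument. Interpret an arbitrary bisequent by a three-valued reading that extends the one used in the Soundness and Completeness theorem. A valuation $v$ (for the tables of $\bf K_3^d$/$\bf LP^d$) \emph{falsifies} $\Gamma\Rightarrow\Delta\mid\Pi\Rightarrow\Sigma$ iff
\[
v(C)=1 \text{ for all } C\in\Gamma,\quad v(C)\neq 1 \text{ for all } C\in\Delta,\quad v(C)\neq 0 \text{ for all } C\in\Pi,\quad v(C)=0 \text{ for all } C\in\Sigma .
\]
A bisequent is \emph{valid} iff no valuation falsifies it. Thus the left sequent speaks about the value $1$ and the right sequent about non-$0$ values. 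With $\Pi=\Sigma=\emptyset$, or with $\Gamma=\Delta=\emptyset$, this reduces to the entailment relations of $\bf K_3^d$ and $\bf LP^d$.

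The first step is \textbf{soundness}: every axiomatic bisequent is valid, and every rule of \textbf{BSC} preserves validity downwards. For the axioms, $\Gamma\cap\Sigma\neq\emptyset$ would require some $C$ with $v(C)=1$ and $v(C)=0$. The cases $\Gamma\cap\Delta$ and $\Pi\cap\Sigma$ are immediate. The second step is \textbf{invertibility}: for each rule, a valuation falsifies the conclusion iff it falsifies at least one premise. For instance, $v(\neg A)=1$ iff $v(A)=0$, which matches $(\neg\Rightarrow\mid)$ moving $A$ into $\Sigma$. For $(\mid\rightarrow\Rightarrow)$, the weak Kleene table gives $v(A\rightarrow B)\neq 0$ iff $v(A)\neq 1$ or $v(B)\neq 0$. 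This matches the two premises, one with $A$ in $\Delta$ and one with $B$ in $\Pi$. For $(\Rightarrow\rightarrow\mid)$, $v(A\rightarrow B)\neq 1$ iff one of three situations holds. These are: $v(B)\ne 1$ and $v(A)\neq 0$; $v(A)=\half$ (i.e.\ $A\neq 1$ and $A\neq 0$); or $v(B)=\half$. These correspond exactly to the three premises. The other implication rule and the conjunction/disjunction rules (min/max) are checked the same way. I expect this case analysis for the two three-premise implication rules to be the main point where care is needed, since they encode the infectious behaviour of $\half$.

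The third step is \textbf{completeness for arbitrary bisequents}. Take a valid bisequent and run exhaustive proof search. Every rule reduces complexity and is invertible, so the search terminates in leaves consisting of propositional variables only. Suppose some leaf is non-axiomatic, so $\Gamma\cap\Delta=\Gamma\cap\Sigma=\Pi\cap\Sigma=\emptyset$ there. Define a valuation on its variables as follows:
\begin{itemize}
\item $v(p)=1$ for $p\in\Gamma$;
\item $v(p)=0$ for $p\in\Sigma$;
\item $v(p)=\half$ for $p\in\Delta\cap\Pi$;
\item $v(p)=0$ for the remaining $p\in\Delta$;
\item $v(p)=1$ for the remaining $p\in\Pi$.
\end{itemize}
The disjointness conditions make this well defined and ensure that it falsifies the leaf. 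By invertibility, $v$ then falsifies the root, contradicting validity. Hence every valid bisequent has an axiomatic proof-search tree.

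Finally, admissibility of the cut rules follows. Suppose both premises of $(Cut\mid)$ are derivable, hence valid, and some $v$ falsifies the conclusion. Then $v$ falsifies all the context formulas of both premises. Validity of the left premise forces $v(A)=1$, and validity of the right premise forces $v(A)\neq 1$, a contradiction. For $(\mid Cut)$, the same argument gives $v(A)=0$ from the left premise and $v(A)\neq 0$ from the right one. So the conclusion is valid, and by the completeness step it is derivable in \textbf{BSC} without cut.
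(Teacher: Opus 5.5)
Your argument is correct, but it takes a different route from the paper. The paper gives no argument of its own: it invokes the general cut-admissibility theorem for table-generated bisequent calculi, \cite[Theorem 6.4]{IP2}. As far as I can tell that theorem is proved syntactically, with the structural results the paper restates next (height-preserving weakening, contraction and invertibility) as its ingredients. You instead obtain cut semantically, in three steps.
\begin{enumerate}
\item You extend the paper's Soundness and Completeness theorem to arbitrary bisequents, under the reading ``left sequent about the value $1$, right sequent about non-$0$''. The paper states that theorem only for the shapes $\Gamma\Rightarrow A\mid\;\Rightarrow$ and $\Rightarrow\;\mid\Gamma\Rightarrow A$. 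The extension is genuinely needed, because a cut conclusion is an arbitrary bisequent.
\item You show that a valuation falsifying the cut conclusion would have to give the cut formula contradictory values.
\item You recover a cut-free proof of the conclusion from completeness.
\end{enumerate}
I checked your claim that a valuation falsifies the conclusion iff it falsifies some premise, for all sixteen logical rules. This includes both three-premise implication rules, checked against the weak Kleene table. I also checked your countermodel for non-axiomatic atomic leaves; it is well defined and falsifies the leaf.

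Your route gives a short, self-contained proof. It also recovers the paper's completeness theorem as a special case and makes clear why cut holds. The syntactic route instead gives an actual transformation of derivations with cut into cut-free ones. It does not depend on any completeness theorem and works uniformly across the whole family treated in \cite{IP2}. Your method, by contrast, produces a cut-free proof only by searching afresh from the conclusion.

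There are two small slips to fix.
\begin{itemize}
\item In the $(\mid Cut)$ case you have the premises reversed. The left premise, with $A$ in the right succedent, forces $v(A)\neq 0$. The right premise, with $A$ in the right antecedent, forces $v(A)=0$. This does not affect the contradiction.
\item Termination of exhaustive proof search does not follow from invertibility. It also does not follow from plain formula size, since the second premise of $(\Rightarrow\rightarrow\mid)$ contains $A$ twice. You should state a measure that does decrease, such as the multiset ordering on formula complexities.
\end{itemize}
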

\begin{proof}
     Follows from \cite[Theorem 6.4]{IP2}.
\end{proof}

\begin{theorem}[Height-preserving admissibility of weakening and contraction] The following rules are height-preserving admissible in \textbf{BSC}:
    \begin{center}
        \noindent\begin{tabular}{ll}	
	$(\mathord{W}\mathord{\Rightarrow\mid})$ \ $\dfrac{\Gamma
		\Rightarrow \Delta \mid S}{A, \Gamma \Rightarrow \Delta \mid S}$ &
	$(\mathord{\Rightarrow}\mathord{W\mid})$ \ $\dfrac{\Gamma
		\Rightarrow \Delta \mid S}{\Gamma
		\Rightarrow \Delta, A \mid S}$\\[16pt]
	
	$(\mathord{\mid W}\mathord{\Rightarrow})$ \ $\dfrac{S \mid \Pi\Rightarrow\Sigma}{S\mid A, \Pi\Rightarrow\Sigma}$ &
	$(\mathord{\mid\Rightarrow}\mathord{W})$ \ $\dfrac{S \mid \Pi\Rightarrow\Sigma}{S \mid \Pi\Rightarrow\Sigma, A}$\\[16pt]	
\end{tabular}
\noindent\begin{tabular}{ll}	
	$(\mathord{C}\mathord{\Rightarrow\mid})$ \ $\dfrac{A,A,\Gamma
		\Rightarrow \Delta \mid S}{A, \Gamma \Rightarrow \Delta \mid S}$ &
	$(\mathord{\Rightarrow}\mathord{C\mid})$ \ $\dfrac{\Gamma
		\Rightarrow \Delta, A, A \mid S}{\Gamma
		\Rightarrow \Delta, A \mid S}$\\[16pt]
	
	$(\mathord{\mid C}\mathord{\Rightarrow})$ \ $\dfrac{S \mid A,A, \Pi\Rightarrow\Sigma}{S\mid A, \Pi\Rightarrow\Sigma}$ &
	$(\mathord{\mid\Rightarrow}\mathord{C})$ \ $\dfrac{S \mid \Pi\Rightarrow\Sigma, A, A}{S \mid \Pi\Rightarrow\Sigma, A}$\\[16pt]	
\end{tabular}
    \end{center}
\end{theorem}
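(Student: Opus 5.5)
We prove each part by induction on the height $n$ of an axiomatic proof-search tree for the premise, and show in every case that the conclusion has an axiomatic proof-search tree of height at most $n$. The key feature of \textbf{BSC} that makes this work is that the axioms are defined purely by overlap: a bisequent is axiomatic iff one of $\Gamma\cap\Sigma$, $\Gamma\cap\Delta$, $\Pi\cap\Sigma$ is nonempty. No axiom is restricted to atoms or to empty contexts, and, as noted after the rules, every rule is context-independent, with arbitrary side multisets carried unchanged from conclusion to premises.

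\emph{Weakening.} Take $(\mathord{W}\mathord{\Rightarrow\mid})$ as representative. If $n=0$, the premise is axiomatic. Adding $A$ to the first antecedent only enlarges a multiset, so any nonempty intersection survives and the conclusion is axiomatic. If $n>0$, the last rule applied has premises of height $<n$. Since the rule's contexts are arbitrary, we apply the induction hypothesis to add $A$ to the corresponding position in each premise. We then reapply the same rule, with $A$ absorbed into the context $\Gamma$ (or into $S$, for rules acting on the other component). This uses the same height. The other three weakening rules are handled identically, since adding a formula anywhere can only preserve the intersections required by the axioms.

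\emph{Contraction.} We first note that height-preserving \emph{invertibility} of all logical rules holds. Each rule's premises are obtained from its conclusion by deleting the principal formula and adding formulas. So, given a tree for the conclusion of height $n$, an induction on $n$ yields trees for each premise of height $\le n$. The principal formula either is a side formula at the last step, which is handled by induction together with context-independence, or is principal there, in which case we take the premise directly and use height-preserving weakening. Axiomatic cases need care: if the principal formula $\neg A$ or $A\wedge B$ witnessed the overlap, then the premise must be shown derivable separately. For that we first establish that the generalised axioms $A,\Gamma\Rightarrow\Delta,A\mid S$, $S\mid A,\Pi\Rightarrow\Sigma,A$, and $A,\Gamma\Rightarrow\Delta\mid\Pi\Rightarrow\Sigma,A$ already count as axioms for arbitrary $A$. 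This is exactly how axiomaticity is defined, so it is immediate, but the invertibility claim still needs the premise to be derivable with bounded height. We expect this interplay to be the main obstacle. The simplest route is to prove, by induction on the complexity of $A$, that the rule premises are provable whenever the principal formula lies in an overlap. Should this fail to preserve height, we fall back on the original height-preserving argument of \cite{IP2}, which this calculus instantiates.

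With invertibility in hand, contraction is proved by induction on height. Let $A,A$ be the two copies to be contracted. In the axiomatic case, removing one copy still leaves a copy in the relevant intersection. If neither copy is principal at the last step, we apply the induction hypothesis to the premises and reapply the rule. If one copy is principal, we apply height-preserving inversion to the other copy in the premises, contract the resulting duplicated subformulas by the induction hypothesis, and reapply the rule. The three-premise rules $(\mathord{\Rightarrow}\mathord{\rightarrow}\mid)$ and $(\mathord{\rightarrow}\mathord{\Rightarrow}\mid)$ require matching each premise with the corresponding inverted premise, possibly together with weakenings. These are the cases we would check most carefully.
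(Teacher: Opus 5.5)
\textbf{There is a genuine gap here.} Your weakening argument is fine. Your plan for contraction is the standard route: height-preserving inversion, then induction on height, inverting the non-principal copy and contracting the duplicated active formulas. The paper gives no argument of its own and only cites \cite[Theorems 6.1 and 6.3]{IP2}.

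The gap is exactly the step you flag and leave open: the base case of inversion when the principal formula is itself one of the overlapping formulas. Here axiomaticity means overlap of \emph{arbitrary} formulas, so this base case is false at height $0$. For instance, $\neg p\Rightarrow\;\mid\;\Rightarrow\neg p$ is axiomatic, but the premise $\Rightarrow\;\mid\;\Rightarrow\neg p,p$ of $(\mathord{\neg}\mathord{\Rightarrow\mid})$ is not. Your induction on the complexity of $A$ can only give derivability at positive height, and ``falling back on'' the cited argument is not a proof.

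A smarter treatment of this case cannot rescue things either, because contraction itself is not height-preserving for this notion of axiom. Take $p\vee q,\,p\vee q\Rightarrow p\wedge(p\vee q),\,q\wedge(p\vee q)\mid\;\Rightarrow$. Apply $(\mathord{\vee}\mathord{\Rightarrow}\mid)$ to one copy of $p\vee q$. Then apply $(\mathord{\Rightarrow}\mathord{\wedge\mid})$ to $p\wedge(p\vee q)$ in the $p$-branch and to $q\wedge(p\vee q)$ in the $q$-branch. All four leaves close through $p$, $q$, or the untouched copy of $p\vee q$ in $\Gamma\cap\Delta$, so this bisequent has a proof of height $2$.

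Now consider the contracted bisequent $p\vee q\Rightarrow p\wedge(p\vee q),\,q\wedge(p\vee q)\mid\;\Rightarrow$. It has exactly three possible first steps, and each leaves a premise that no single rule application closes. These premises are $p\Rightarrow p\wedge(p\vee q),\,q\wedge(p\vee q)\mid\;\Rightarrow$, then $p\vee q\Rightarrow p,\,q\wedge(p\vee q)\mid\;\Rightarrow$, and its mirror image. So the contracted bisequent needs height $3$.

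The missing idea is to restrict axioms to \emph{atomic} overlaps: require one of $\Gamma\cap\Sigma$, $\Gamma\cap\Delta$, $\Pi\cap\Sigma$ to contain a propositional variable. No atom is ever principal, and every rule copies its whole context into each premise, so the base case of inversion becomes trivial. The rest of your argument then goes through as you describe, including matching each premise of the three-premise implication rules with the corresponding inverted premise. The induction hypothesis must cover all four contraction positions at once, since active formulas move between components.

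Overlap of arbitrary formulas then becomes a derived fact, proved by induction on the formula. It costs height but does not change which bisequents are provable. Soundness, completeness and cut admissibility are therefore unaffected. The statement holds, and your proof works, for this atomic-axiom version of \textbf{BSC}, not for the calculus exactly as it is defined in the paper.
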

\begin{proof}
    Follows from \cite[Theorem 6.1]{IP2} and \cite[Theorem 6.3]{IP2}.
\end{proof}

\begin{theorem}[Height-preserving invertibility]\label{Invertability}
For every instance of a logical rule, if its conclusion has a proof of height $n$, then each premise has a proof of height at most $n$.	
\end{theorem}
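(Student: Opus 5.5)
Height-preserving invertibility is proved by induction on the height $n$ of a proof-search tree for the conclusion, with one argument per logical rule. Fix a rule $R$ with principal formula $F$ occurring in a given position of the conclusion: one of the four positions antecedent or succedent of the left or right sequent. Let $\mathcal{D}$ be an axiomatic proof-search tree of height $n$ for the conclusion. We show that each premise of $R$ has an axiomatic proof-search tree of height at most $n$.

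\emph{Base case.} Suppose $n=0$, so the conclusion is axiomatic. The intersection witnessing this ($\Gamma\cap\Sigma$, $\Gamma\cap\Delta$ or $\Pi\cap\Sigma$) either avoids $F$ or contains $F$. If it avoids $F$, the same shared formula lies in the side contexts. Every rule of \textbf{BSC} keeps these contexts, only adding formulas, so each premise is axiomatic as well.

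If the shared formula is $F$ itself, we cannot argue this way. Then we use the fact that axiomatic bisequents with a complex shared formula are derivable. More precisely, we first prove a lemma by induction on the formula: every bisequent with $A$ occurring in a clashing pair of positions has a proof-search tree. We then replace the base case by ``the premises are derivable'' and weaken the conclusion to derivability. If height must be exactly preserved, the alternative is to restrict the axioms to atomic formulas. In \cite{IP2} this issue is handled by this standard route, and I would follow their convention. I would first try to check whether the atomic restriction yields an equivalent calculus via the lemma.

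\emph{Inductive step.} Suppose $\mathcal{D}$ ends with a rule $R'$.
\begin{itemize}
\item If $R'$ has $F$ as its principal formula in the same position, then $R'=R$, because each connective-position pair has exactly one rule. The premises of $\mathcal{D}$ are then exactly the required premises, of height at most $n-1$.
\item Otherwise $F$ lies in the context of every premise of $R'$. This holds because all rules are context-independent and merely add formulas to the context. We apply the induction hypothesis to each premise of $R'$, obtaining the corresponding premises of $R$ with height at most $n-1$. We then reapply $R'$ to these trees. This is legitimate because $R'$ imposes no side conditions, so its rules permute, as remarked before the theorem. The resulting tree has height at most $n$.
\end{itemize}

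Multi-premise rules such as $(\mathord{\Rightarrow}\mathord{\rightarrow}\mid)$ need no special care. Each premise of $R$ is treated separately: for each of them we invert every premise of $R'$ individually.

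\emph{Main obstacle.} The delicate point is the base case with a complex principal formula in an axiom. This case must be settled by the convention on axioms, either atomic axioms or the identity-expansion lemma just described. Otherwise, everything follows the routine pattern of \cite[Section 6]{IP2}, exploiting explicitness and separation of the rules.
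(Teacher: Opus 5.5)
Your inductive step is correct and is the standard argument. Every rule of \textbf{BSC} retains the whole context, and each connective has exactly one rule per position. So if the last rule is not principal on $F$, you invert its premises by the induction hypothesis and reapply it. The paper gives no argument of its own; it only cites \cite[Theorem 6.2]{IP2}.

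The gap is the base case, which you flag and then leave open. It cannot be closed. With axioms as this paper defines them (any formula, not only a variable, in $\Gamma\cap\Sigma$, $\Gamma\cap\Delta$ or $\Pi\cap\Sigma$), the statement is false. The bisequent $p_1\wedge q_1\Rightarrow p_1\wedge q_1\mid\;\Rightarrow$ is axiomatic, so the one-node tree is a proof of it. But consider the premise $p_1,q_1\Rightarrow p_1\wedge q_1\mid\;\Rightarrow$ of $(\mathord{\wedge}\mathord{\Rightarrow\mid})$. All three intersections are empty, so any proof of it must apply at least one rule (here $(\mathord{\Rightarrow}\mathord{\wedge\mid})$). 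Such a proof is strictly higher than the one-node proof of the conclusion.

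So neither of your fallbacks establishes the theorem as stated.
\begin{itemize}
\item Weakening to derivability yields only plain invertibility.
\item Passing to atomic axioms gives a calculus with the same derivable bisequents. Your identity-expansion lemma does go through here, including for the three-premise implication rules. But that equivalence says nothing about heights, and the example shows the heights genuinely differ.
\end{itemize}

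Rather than deferring to ``the convention of \cite{IP2}'', you should say this explicitly. Height-preserving invertibility holds for \textbf{BSC} with axioms restricted to propositional variables. There your base case is immediate: the principal formula of a logical rule is compound, so the atomic witness of an axiom lies in the context, which every premise keeps. For the calculus as the paper defines it, you only get invertibility without the height bound. You obtain it by expanding axioms to atomic ones, inverting in the atomic calculus, and reading the result back as a \textbf{BSC} proof.
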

\begin{proof}
    Follows from \cite[Theorem 6.2]{IP2}.
\end{proof}

\section{Conclusion}\label{conclusion}
In this paper, we introduced two new three-valued logics, $ \mathbf{K_3^d}$ and $ \mathbf{LP^d}$, obtained by combining the conjunction and disjunction of $ \mathbf{K_3}$ and \textbf{LP} with weak Kleene--Bochvar implication and negation. These logics provide a simple many-valued setting in which variable-inclusion conditions can be seen as a variant of Epstein-style subject-matter restrictions. Although $ \mathbf{K_3^d}$ and $ \mathbf{LP^d}$ are not dependence logics in Epstein's original sense, they might be called dependence-like logics. We also provided sound, complete, and cut-free bisequent calculi for both logics using the uniform proof-theoretic framework developed in \cite{IP1,IP2}. One of the possible directions for further research is to extend $ \mathbf{K_3^d}$ and $ \mathbf{LP^d}$ by quantifiers and definite descriptions.

\paragraph{Acknowledgements.}  I would like to thank an anonymous referee for useful and constructive comments on an earlier version of the paper.  The research was funded by the European Union (ERC, ExtenDD, project number: 101054714). Views and opinions expressed are however those of the author(s) only and do not necessarily reflect those of the European Union or the European Research Council. Neither the European Union nor the granting authority can be held responsible for them.

\end{document}